\DeclareMathOperator{\sgn}{sgn}
\newtheorem{theorem}{Theorem}
\newtheorem{lemma}{\emph{Lemma}}
\newtheorem*{conjecture}{Conjecture}
\newtheorem{fact}{Fact}
\begin{document}

\title{Coherence manipulation in asymmetry and thermodynamics}
\author{Tulja Varun Kondra$^\dagger$}
\email{tuljavarun@gmail.com}
\affiliation{Centre for Quantum Optical Technologies, Centre of New Technologies,
University of Warsaw, Banacha 2c, 02-097 Warsaw, Poland}
\affiliation{Institute for Theoretical Physics III, Heinrich Heine University D\"{u}sseldorf, Universit\"{a}tsstra{\ss}e 1, D-40225 D\"{u}sseldorf, Germany}

\author{Ray Ganardi$^\dagger$}
\affiliation{Centre for Quantum Optical Technologies, Centre of New Technologies,
University of Warsaw, Banacha 2c, 02-097 Warsaw, Poland}
\author{Alexander Streltsov}
\affiliation{Institute of Fundamental Technological Research, Polish Academy of Sciences, \\ Pawińskiego 5B, 02-106 Warsaw, Poland}
\affiliation{Centre for Quantum Optical Technologies, Centre of New Technologies,
University of Warsaw, Banacha 2c, 02-097 Warsaw, Poland}

\begin{abstract}
In the classical regime, thermodynamic state transformations are governed by the free energy.
This is also called as the second law of thermodynamics.
Previous works showed that, access to a catalytic system allows us to restore the second law in the quantum regime when we ignore coherence.
However, in the quantum regime, coherence and free energy are two independent resources.
Therefore, coherence places additional non-trivial restrictions on the the state transformations, that remains elusive.
In order to close this gap, we isolate and study the nature of coherence, i.e.\ we assume access to a source of free energy.
We show that allowing catalysis along with a source of free energy allows us to amplify any quantum coherence present in the quantum state arbitrarily.
Additionally, any correlations between the system and the catalyst can be suppressed arbitrarily.
Therefore, our results provide a key step in formulating a fully general law of quantum thermodynamics.
\end{abstract}

\maketitle
\def\thefootnote{$\dagger$}\footnotetext{These authors contributed equally to this work.}

Thermodynamics was motivated by the need to understand what are the allowed state transformations of a system.
The restrictions are codified into \emph{the laws of thermodynamics}, which take on a central importance in the way we understand how nature works.
One of the highlights is the concept of free energy, that is, the maximum available work that can be reversibly extracted in a cycle from a system.
Due to the time of its conception, this was done in a classical setting, with a large ensemble that is in equilibrium, and thermodynamic quantities understood as averages.

Recently, there has been an effort to understand how thermodynamics translates to quantum systems~\cite{janzing2000thermodynamic,Horodecki_2013}.
Some of the most significant advances on this front work within the so-called resource theory approach, with the Gibbs state as the free state and the set of thermal operations as the free operations~\cite{Chitambar_2019,Horodecki_2013,Lostaglio_2019,Ng2018}.
Notably, it was shown that transformations between classical (diagonal) states are actually governed by a ``family'' of free energies, that reduce to the standard Gibbs free energy in the macroscopic limit~\cite{Brandão_2015}.
On a technical level, they proved that these free energies characterize the allowed exact catalytic transformations, with the environment fulfilling the catalytic conditions \cite{datta2022catalysis}.
This makes sense since the state of the environment should be unchanged in a cyclic process.

However, requiring exact catalysis might often be too strict. An analogous result showed that if we allow for a small system-catalyst correlation (approximate catalysis), state transformations between classical states are completely characterised by the standard Gibbs free energy~\cite{PhysRevX.8.041051}. 
For transformations between general states, thermal operations additionally exhibit coherent restrictions.
Even when we augment thermal operations with an unbounded source of free energy, these restrictions due to coherence still persist, forbidding transformations that create coherence~\cite{Lostaglio_2015}.
In fact, this setting gives rise to the so-called resource theory of asymmetry \cite{Lostaglio_2015}.
Therefore, studying the resource theory of asymmetry allows us to understand the restrictions on coherence manipulation, imposed by the allowed thermodynamic transformations. While there are some recent results on characterizing catalytic transformations in this setting~\cite{PhysRevA.103.022403,Takagi_2022,PhysRevLett.123.020403,PhysRevLett.123.020404}, a general characterization is still missing.

In this work, we argue that the restrictions that arise solely from coherence can be expressed as a set-inclusion relation.
This is because any non-zero coherence in the initial state can be amplified by an arbitrary amount.
Our result answers a conjecture posed in Ref.~\cite{Takagi_2022} in the positive.
Moreover, it suggests that the laws of thermodynamics take on a simple form, determined by free energy and set-inclusion of the coherences.

\emph{Preliminaries.---}
Resource theories provide us with a mathematical framework to study various quantum resources \cite{Chitambar_2019}.
A (quantum) resource theory is defined by a set \emph{free states} $\mathcal{F}_s$ that do not contain any resources, and a set of \emph{free operations} $\mathcal{F}_o$ that are easily implementable. 
We require that the set of free states is preserved by the free operations, i.e. for every free state $\rho$ and every free operation $\Lambda$, we have $\Lambda(\rho)\,\in\, \mathcal{F}_s$.

This captures the intuition that \emph{resources} cannot be created for free.
Depending on the structure of the theory, having access to a (resource) state $\sigma\notin \mathcal{F}_s$ might allow us to access some other resource states as well as implement operations which are not free, using free operations.
This resource theoretic approach has be successfully applied to study various quantum resources such as entanglement~\cite{Horodecki_2009}, coherence~\cite{Streltsov_2017}, purity~\cite{HorodeckiPhysRevA.67.062104,GOUR_purity,Streltsov_2018}, imaginarity~\cite{Gour2018,Wu_PRL,IM2,varun_imaginarity} and magic \cite{veitch2014resource,PhysRevLett.118.090501} .

The \emph{resource theory of thermodynamics} was defined to model the thermodynamics of quantum systems \cite{Lostaglio_2019}. 
To every system ($S$), we associate a Hamiltonian $H_S$ and the Gibbs state $\gamma_{S}$, where $\gamma_{S}=\frac{e^{-\beta H_{S}}}{\Tr e^{-\beta H_{S}}}$. Here, $\beta=\frac{1}{k_BT}$ is the inverse temperature. The set of free states is defined to contain only the Gibbs state, motivated by the classic result that it is the only state from which we cannot extract any work, even from multiple copies~\cite{Pusz_1978}. 
When we have a joint system, we assume that the joint Hamiltonian is simply a sum of the local Hamiltonians, i.e.\ $H_{SS'} =H_S\otimes I_{S'}+I_S\otimes H_{S'}$.
The set of free operations are defined to be thermal operations \cite{Lostaglio_2019}.
These are the operations that can be built up from the following steps:
\begin{enumerate}
    \item Bringing the system in contact with another system in a Gibbs/thermal state, $\rho_S \mapsto \rho_S \otimes \gamma_{S'}$.
    \item Applying an energy-preserving unitary, $\rho_S\otimes \gamma_{S'} \mapsto U \rho_S\otimes \gamma_{S'} U^\dagger$, with $[U, H_{SS'}] = 0$.
    \item Tracing out subsystems, $\rho_{SS'} \mapsto \Tr_{S'} \rho_{SS'}$.
\end{enumerate}

There are two properties that together feature in thermal operations: \cite{Lostaglio_2015}
\begin{enumerate}
\item Time-translation invariance 
    \begin{equation}\label{time_trans_inv}
         \forall \rho, t,\,  e^{-iH_{S'}t}\Lambda_{S\to S'}\pqty{\rho}e^{iH_{S'}t}=\Lambda_{S\to S'}\pqty{e^{-iH_St}\rho e^{iH_St}}
    \end{equation}
\item Gibbs state preservation
    \begin{equation}\label{Gibbs_pres}
        \Lambda_{S\rightarrow S'}\pqty{\gamma_S}=\gamma_{S'}
    \end{equation}
\end{enumerate}
First condition alone gives us covariant operations, while keeping only the second gives us Gibbs-preserving operations.
Note that when we augment thermal operations with an infinite store of incoherent work, we obtain exactly the set of covariant operations.
Formally, adding $\rho_S \mapsto \rho_S \otimes \omega_{S'}$ to the set of thermal operations allow us to implement any covariant operations.
Here, we require that $\omega_{S'}$ is incoherent, i.e.\ $[\omega_{S'}, H_{S'}] = 0$. Recall that covariant operations are exactly the set of maps that are invariant under time-translation, and thus we obtain the resource theory of asymmetry~\cite{Marvian_2014}.
In contrast, a similar prescription to obtain Gibbs-preserving maps is an open problem~\cite{gibbs_open_problem}.

There is another crucial ingredient in thermodynamical processes, that is the environment.
In a cyclic process, we expect to be able to ``borrow'' a system, as long as we return it in the same state in the end.
This is exactly the notion of catalysis in state transformations, first introduced in the theory of entanglement~\cite{Jonathan_1999} and extended to other resource theories more recently~\cite{datta2022catalysis,lipkabartosik2023catalysis}.
In the context of thermodynamics, Ref.~\cite{Brandão_2015} showed that exact catalytic transformations between diagonal states are governed by a family of free energies,
i.e.\ for any incoherent states $\rho, \sigma$, the free energies $S_\alpha (\rho || \gamma) = \frac{\sgn{\alpha}}{\alpha - 1} \log \Tr \rho^\alpha \gamma^{1-\alpha}$ are non-decreasing $S_\alpha(\rho||\gamma)\geq S_\alpha(\sigma||\gamma)$ for all $\alpha \in(-\infty,\infty)$ if and only if there exists a catalyst $\tau_C$ and a thermal operation $\Lambda_{SC}$ such that $\Lambda_{SC} \pqty{\rho_S \otimes \tau_C} = \sigma_S \otimes \tau_C$. This is also known as an exact catalytic transformation. However, requiring the final state to be in exact tensor product might be too strict, as requiring the marginal state of the catalyst to be preserved and the system-catalyst correlations to be small is often a good enough approximation. Therefore, we will say $\rho$ can be transformed into $\sigma$ with approximate catalysis if for every $\varepsilon>0$ there is a catalyst $\tau_C$ and a thermal operation $\Lambda_{SC}$ such that 
\begin{equation}\label{approximate}
\begin{aligned}\mu_{SC} =\Lambda_{SC}\pqty{\rho_S\otimes\tau_C},\,\,
||\mu_{SC}-\sigma_S\otimes\tau_C||_{1} < \varepsilon\,\,\text{and}\,\mu_C=\tau_C.
\end{aligned}
\end{equation}
Here, $||M||_1 = \mathrm{Tr}\sqrt{M^\dagger M}$ is the trace norm. Note that, one can also define so called correlated catalytic transformations,
by replacing $\norm{\mu_{SC} - \sigma_S \otimes \tau_C}_1 < \varepsilon$ by $\norm{ \mu_{S} - \sigma_S}_1 < \varepsilon$ in Eq. (\ref{approximate}). Since $\norm{\mu_{SC} - \sigma_S \otimes \tau_C}_1 < \varepsilon$ implies  $\norm{\mu_{S} - \sigma_S}_1 < \varepsilon$, a transformation is achievable by approximate catalysis, then it is also achievable by correlated catalysis. Ref.~\cite{PhysRevX.8.041051} showed that when $\rho$ and $\sigma$ are incoherent, then $\rho$ can be transformed into $\sigma$ via thermal operations and approximate catalysis if and only if $S(\rho||\gamma_S)\geq S(\sigma||\gamma_S)$. Here, $S(\rho || \gamma) = \Tr \rho \pqty{\log \rho - \log \gamma}$ is the quantum relative entropy (or the quantum Kullback-Leibler (KL) divergence). This statement has been extended to general quantum states, but by replacing thermal operations with Gibbs-preserving operations~\cite{PhysRevLett.126.150502}. However as mentioned before, Gibbs-preserving operations are in general more powerful than thermal operations for transformation of general quantum states, due to the effects of coherence~\cite{faist2015gibbs}.

In this article, we will focus on studying the effects of coherence, which complements the existing results that ignore coherence~\cite{Horodecki_2013,PhysRevX.8.041051,PhysRevLett.126.150502}.
To do so, we analyze catalytic transformations in the resource theory of asymmetry using recently developed tools of marginal reducibility~\cite{ganardi2023catalytic,Ferrari2023,sapienza2019correlations}.

\emph{Catalytic transformations in resource theory of asymmetry.---}
In the resource theory of asymmetry, we associate a Hamiltonian $H$ to every system as in the resource theory of thermodynamics. The free states are given by states which are invariant under time-translation, i.e. states which commute with the hamiltonian $H$, while the free operations are the covariant operations~\cite{Marvian_2014}. As discussed earlier, we can think of restrictions arising in this theory as general limitations of coherence processing through thermal operations, since covariant operations can be implemented by thermal operations, when one has access to a source of free energy.

It is known that covariant operations and approximate catalysis cannot create coherence if the initial state is incoherent, which puts an important restriction on the power of approximate catalysis (no-broadcasting theorem)~\cite{PhysRevLett.123.020403,PhysRevLett.123.020404}. However, this does not cover the case when the initial state $\rho_S$  has non-zero coherence on some level pairs. We investigate exactly this setting, giving a sufficient condition for the existence of a catalytic transformations.

Following the formalism of Ref.~\cite{Takagi_2022}, we call $\mathcal{I}\pqty{\rho} = \Bqty{\Delta_{ij} = E_i - E_j\ |\ \bra{i} \rho \ket{j} \neq 0}$ the available coherences of $\rho$, i.e.\ all the energy differences for which $\rho$ has non-zero coherence.
We then construct the reachable coherences $\mathcal{J}\pqty{\rho} = \Bqty{\Delta E\ |\ \Delta E = \sum\limits_{\Delta_{ij} \in \mathcal{I}\pqty{\rho}} m_{ij} \Delta_{ij},\, m_{ij} \in \mathbb{Z} }$, which are the energy differences that can be written as integer multiples of those in $\mathcal{I}\pqty{\rho}$.
Intuitively, these are the coherences that are available in multiple copies of $\rho$~\cite{arxiv_Shiraishi_2023}.
Indeed, an explicit calculation shows that if $\rho_S$ contains non-zero coherence at energy difference $\Delta E_S$ and $\sigma_{S'}$ contains non-zero coherence at energy difference $\Delta E_{S'}$, then $\rho_S \otimes \sigma_{S'}$ contains non-zero coherence at energy difference $\Delta E_{S} + \Delta E_{S'}$. Let us now consider a qubit state $\ket{+_{ij}} = \frac{1}{\sqrt{2}} \pqty{\ket{i} + \ket{j}}$, where $\ket{i}$ and $\ket{j}$ are the energy eigenstates with eigenvalues given by $E_i$ and $E_j$ respectively, such that $\Delta_{ij}=E_i - E_j \in \mathcal{I}(\rho)$. From \cite{Takagi_2022}, we know that having enough number of such qubit states allows us to create $\rho$ via covariant operations. We formalise this as a Fact below.
\begin{fact}[{\cite[Proof of Theorem 2, step 3]{Takagi_2022}}]\label{fact:prepare}
    Given enough copies of $\Bqty{\ket{+_{ij}} \,|\, E_i - E_j \in \mathcal{I}(\rho) }$, we can create the state $\rho$ via covariant transformations.
\end{fact}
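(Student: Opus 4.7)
\emph{Proof proposal.}
The plan is to build $\rho$ from the qubit resources in three steps: (i) decompose $\rho$ into pure components whose supports lie in a single connected component of the coherence graph induced by $\mathcal{I}(\rho)$; (ii) prepare each such pure component by a covariant procedure that adds one energy level at a time; and (iii) mix the pure preparations classically with the eigenvalues of $\rho$.

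For step (i), consider the graph $G$ on the energy eigenstates with an edge between $i$ and $j$ whenever $\Delta_{ij}\in\mathcal{I}(\rho)$. Since the matrix element $\bra{a}\rho\ket{b}$ vanishes whenever $\Delta_{ab}\notin\mathcal{I}(\rho)$, the state $\rho$ is block-diagonal with respect to the connected components of $G$, and its spectral eigenvectors $\ket{\psi_k}$ are each confined to a single component $C_k$. This reduces the task to preparing an arbitrary pure state $\ket{\psi}=\sum_{i\in C}c_i\ket{i}$ whose support lies in a connected component $C$ of $G$.

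For step (ii), I traverse a spanning tree of $C$: starting from a single level $i_0\in C$, trivially prepared as $\ket{i_0}$, I add one new vertex $\ell$ at a time, adjacent to some already-populated $t$. Since the edge $(t,\ell)$ belongs to $G$, we have $\Delta_{t\ell}\in\mathcal{I}(\rho)$, so a fresh copy of $\ket{+_{t\ell}}$ is at hand; a covariant unitary acting on the current state, this qubit, and many auxiliary copies of $\{\ket{+_{ij}}\}$ playing the role of a quantum reference frame merges the $\ket{\ell}$-branch of $\ket{+_{t\ell}}$ into the target with the desired amplitude $c_\ell$ (possibly zero). Longer-range coherences such as at $\Delta_{t\ell}+\Delta_{\ell m}$ are generated automatically through the tensor-product structure of the resource, consistent with the paper's remark that products of the $\ket{+_{ij}}$ populate all of $\mathcal{J}(\rho)$. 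For step (iii), classical randomization over $k$ with weights $p_k$ is itself covariant, and applying it to the pure preparations of step (ii) delivers $\rho$ to the same precision.

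The principal obstacle is the amplitude control inside the inductive merging: covariance forbids arbitrary rotations between energy eigenspaces unless an asymmetry resource is supplied externally. The standard resolution is the asymptotic theory of quantum reference frames---as the number of auxiliary copies grows, $\ket{+_{ij}}^{\otimes n}$ approaches a classical $U(1)$ reference for the one-parameter subgroup generated by the restriction of $H$ to $\vspan\{\ket{i},\ket{j}\}$, and a covariant unitary on the joint system implements the desired merging rotation to any target precision. Once this base building block is in place, the spanning-tree induction and the final classical mixing are routine.
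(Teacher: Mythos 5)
The paper does not prove this Fact itself --- it imports it verbatim from Ref.~\cite{Takagi_2022} (Proof of Theorem 2, step 3) --- so the comparison is against that cited construction. Your outline matches it in all essentials: reduce to pure eigenvectors supported on connected components of the coherence graph, prepare each pure component covariantly using the supplied $\ket{+_{ij}}$ states as an asymptotic phase reference, and recover $\rho$ by classical mixing (all steps being approximate, with error vanishing as the number of copies grows, which is the intended reading of ``enough copies''). Your preliminary observations are sound: $\rho$ is indeed block-diagonal with respect to the components of $G$, products of the $\ket{+_{ij}}$ do supply coherence at the composite differences $\sum m_{ij}\Delta_{ij}$ needed by the eigenvectors, and convex mixing of covariant preparations is covariant. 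The one caveat is that the entire technical content of the cited step --- that a covariant energy-conserving unitary acting on the system together with $\bigotimes_{ij}\ket{+_{ij}}^{\otimes n}$ can implement the desired merging rotation with error vanishing in $n$ --- is the very thing you defer to ``the asymptotic theory of quantum reference frames.'' That deferral is legitimate (it is a standard, citable result, and it is exactly what Ref.~\cite{Takagi_2022} carries out explicitly), but as written your argument is a correct reduction to that lemma rather than a self-contained proof.
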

Ref. \cite{Takagi_2022} also shows that, there exists a sequence of correlated catalytic transformation, which transform $\rho$ into any number $\ket{+_{ij}}$, whenever $E_i - E_j \in \mathcal{J}\pqty{\rho}$.  The Fact below formalises this.

\begin{fact}[{\cite[Supplementary material, Proposition 10]{Takagi_2022}}]\label{fact:distill}
    There exists a sequence of correlated catalytic transformations (also called a quasi-correlated catalytic transformation in Ref.~\cite{Takagi_2022}) that create arbitrarily many copies of $\ket{+_{ij}}$, if $\Delta_{ij} \in \mathcal{J}\pqty{\rho}$.
\end{fact}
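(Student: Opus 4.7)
My plan is to reduce Fact~\ref{fact:distill} to the existence of a single correlated catalytic transformation that produces one additional copy of $\ket{+_{ij}}$ from $\rho$, and then iterate it $M$ times. The essential combinatorial input is that $\rho^{\otimes n}$ already carries coherence at the target gap $\Delta_{ij}$ for some finite $n$. Since $\Delta_{ij}\in\mathcal{J}(\rho)$, I would write $\Delta_{ij}=\sum_{k} m_{k}\Delta_{k}$ with $\Delta_{k}\in\mathcal{I}(\rho)$ and $m_{k}\in\mathbb{Z}$, and use Hermiticity of $\rho$ (which makes $\mathcal{I}(\rho)$ symmetric under negation) to take each $m_{k}$ non-negative. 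For $n=\sum_{k}m_{k}$, writing $\Delta_{k}=E_{i_{k}}-E_{j_{k}}$ with $\bra{i_{k}}\rho\ket{j_{k}}\neq 0$ and choosing $\ket{a}=\ket{i_{1}}^{\otimes m_{1}}\otimes\ket{i_{2}}^{\otimes m_{2}}\otimes\cdots$ together with $\ket{b}=\ket{j_{1}}^{\otimes m_{1}}\otimes\ket{j_{2}}^{\otimes m_{2}}\otimes\cdots$ yields $\bra{a}\rho^{\otimes n}\ket{b}\neq 0$ and $E_{a}-E_{b}=\Delta_{ij}$, certifying that $n$ copies of $\rho$ carry explicit coherence at $\Delta_{ij}$.

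Given this, the main step is to construct a correlated catalytic transformation $\rho\to\rho\otimes\ket{+_{ij}}\bra{+_{ij}}$. I would take the catalyst $\tau_{C}$ to be a tensor product of (i) $n-1$ spare copies of $\rho$, acting as a coherence reservoir that, together with the input copy, isolates the $\Delta_{ij}$ mode identified above, and (ii) a large quantum reference frame of dimension $N$ tracking how many output qubits have been produced. A covariant unitary then maps the joint input-catalyst state to one in which the output register holds $\ket{+_{ij}}$ approximately, the reservoir is refilled from the input copy of $\rho$, and the reference frame is shifted by one rung of size $\Delta_{ij}$; global energy conservation during the shift enforces covariance. For $N$ large, the marginal of $\tau_{C}$ is disturbed by only $O(1/N)$ in trace norm, which is precisely the correlated-catalysis condition. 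Concatenating this step on fresh output registers $M$ times yields the sequence of correlated catalytic transformations that produces $M$ copies of $\ket{+_{ij}}$, and choosing the per-step error as $\varepsilon/M$ keeps the total error below any prescribed $\varepsilon$.

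The main obstacle is the explicit construction of the covariant unitary in the single-copy step: it must simultaneously shift the reference frame, create coherence at $\Delta_{ij}$ on the fresh qubit, and replenish all $n$ reservoir slots in a way that preserves the reservoir's marginal. Unlike a single-mode catalytic-coherence construction, the multi-mode decomposition $\Delta_{ij}=\sum_{k}m_{k}\Delta_{k}$ forces bookkeeping across several incompatible energy sectors, and covariance requires the exchange to balance energy mode by mode, not merely globally. I would build the unitary as a composition of covariant swaps between the input, the reservoir slots corresponding to each $\Delta_{k}$, and the reference ladder, engineered so that the composite action is the desired shift while each elementary swap is manifestly energy conserving; the analysis then amounts to bounding the trace-distance error on the ladder marginal, which scales as $O(1/N)$ and so can be made arbitrarily small.
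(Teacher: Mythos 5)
The paper does not actually prove Fact~\ref{fact:distill}: it is imported verbatim from Ref.~\cite{Takagi_2022} (Supplementary material, Proposition~10), so your sketch has to be judged against the construction there, not against anything in this Letter. Your opening combinatorial step is fine and standard: writing $\Delta_{ij}=\sum_k m_k\Delta_k$ with $m_k\geq 0$ (using Hermiticity to flip signs) and exhibiting $\bra{a}\rho^{\otimes n}\ket{b}=\prod_k\bqty{\bra{i_k}\rho\ket{j_k}}^{m_k}\neq 0$ with $E_a-E_b=\Delta_{ij}$ correctly certifies that finitely many copies of $\rho$ carry coherence at the target gap; the paper itself makes this observation when introducing $\mathcal{J}\pqty{\rho}$.

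The gap is in your single-step catalytic transformation, and it is fatal as written. The definition of correlated catalysis used here, Eq.~\eqref{eq:CorrelatedCatalysis}, demands that the catalyst marginal be returned \emph{exactly}, $\mu_C=\tau_C$; the error $\varepsilon$ lives entirely on the system output. Your ladder reference frame of dimension $N$ is returned only up to $O(1/N)$ in trace norm after the shift, which is the \AA berg-style ``catalytic coherence'' setting. Permitting the catalyst to degrade by an amount that merely vanishes with its size is exactly the embezzlement loophole that the exact-return condition is designed to exclude, and it is why Ref.~\cite{Takagi_2022} must build the catalyst as an exact fixed point (a uniform mixture over the orbit of the iterated map, in the spirit of Ref.~\cite{PhysRevLett.127.080502}) rather than as a large but slightly shifted ladder. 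Separately, the actual content of Proposition~10 is coherence \emph{amplification}: the off-diagonal element you exhibit in $\rho^{\otimes n}$ can be arbitrarily small, so no single covariant unitary on $\rho^{\otimes n}$ plus a ladder can emit a nearly pure $\ket{+_{ij}}$ unless the ladder itself supplies the coherence --- at which point you must explain how that coherence is replenished without violating covariance or the exact-return condition. You explicitly defer the construction of this unitary and its error analysis as ``the main obstacle,'' so the proposal asserts, rather than establishes, precisely the step that constitutes the theorem.
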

Combined, these two results shows that if $\mathcal{I}\pqty{\sigma} \subseteq \mathcal{J}\pqty{\rho}$, then there is a \emph{sequence} of correlated catalytic transformation that transform $\rho$ into $\sigma$.
However, it is not known whether these sequence of correlated catalytic transformations can be combined into a single correlated catalytic transformation.
In other words, we do not know whether correlated catalytic transformations form a transitive relation.
Indeed, Ref.~\cite{Takagi_2022} conjectured that if $\mathcal{I}\pqty{\sigma} \subseteq \mathcal{J}\pqty{\rho}$, then there is a correlated catalytic transformation that transforms $\rho$ into $\sigma$.
Since $\mathcal{I}\pqty{\sigma} \subseteq \mathcal{J}\pqty{\rho}$ is equivalent to $\mathcal{J}\pqty{\sigma} \subseteq \mathcal{J}\pqty{\rho}$, we can alternatively state the conjecture as $\mathcal{J}$ governs correlated catalytic transformations, ordered by set-inclusion.

In this work, we will show that these sequence of catalytic transformations \emph{can} in fact be combined. We will do this by relating the catalytic transformations to the notion of marginal reducibility.

\emph{Catalytic transformations and marginal reducibility.---} 
Ref.~\cite{ganardi2023catalytic} introduced the notion of marginal reducibility and showed its connection to catalytic transformations (Ref.~\cite{Ferrari2023,sapienza2019correlations} introduced the same notion in a different context).
A quantum state $\rho$ is marginally reducible into $\sigma$ if for any $\varepsilon, \delta > 0$ there exists a free operation $\Lambda$ and two natural numbers $m \leq n$ such that the following conditions hold for all $i \leq m$:
\begin{align} \label{eq:CorrelatedReducibility}
\Lambda\left(\rho^{\otimes n}\right) = \mu_{m},\,\,
\left\Vert \bqty{\mu_{m}}_i-\sigma\right\Vert _{1} <\varepsilon\,\,\text{and}\,\,
\frac{m}{n} \geq 1 - \delta. 
\end{align}
Here, $\mu_m$ is a quantum state of $m$ subsystems and the reduced state of $\mu_m$ on $i$-th subsystem is given by $\bqty{\mu_{m}}_i$.
 Compare this to correlated catalysis: we say that $\rho$ can be converted into $\sigma$ via correlated catalysis if for every $\varepsilon > 0$ there is a catalyst $\tau_C$ and a covariant operation $\Lambda_{SC}$ such that
\begin{equation} \label{eq:CorrelatedCatalysis}
\begin{aligned}\mu_{SC} =\Lambda_{SC}(\rho_{S}\otimes\tau_{C}),\,\,
||\mu_{S}-\sigma_{S}||_{1} < \varepsilon\,\,\text{and}\,\,\mu_{C}=\tau_{C}.
\end{aligned}
\end{equation}
Note that the notion of marginal reducibility involves many copies of initial and final states. On the other hand the definition of correlated catalysis only involves single copy of the initial (and final) state. Even though these concepts look different, it was shown in Ref.~\cite{ganardi2023catalytic} that these two concepts are deeply related. Let us first note that the works of \cite{PhysRevLett.127.080502,ganardi2023catalytic}, show us that if $\rho$ is marginally reducible into $\sigma$, then $\rho$ can be transformed into $\sigma$ via correlated catalysis.
\begin{fact}[{\cite[Proposition 3]{ganardi2023catalytic}}]\label{fact:mr-cc}
    Marginal reducibility implies correlated catalysis.
\end{fact}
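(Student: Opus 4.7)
The plan is to convert the many-copy transformation $\Lambda\colon\rho^{\otimes n}\to\mu_m$ guaranteed by marginal reducibility into a single-shot correlated catalytic transformation by a rolling-pipeline construction, in the spirit of standard reductions from many-copy to catalytic transformations.

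First I would (for notational convenience) pad $\mu_m$ to an $n$-slot state by setting $\mu_n := \mu_m \otimes \rho^{\otimes(n-m)}$, so that $\Lambda$ can be regarded as a covariant operation on $n$ slots whose single-site marginals are within trace distance $\varepsilon$ of $\sigma$ on a fraction $m/n \geq 1-\delta$ of slots. I would then take the catalyst to consist of a classical flag register $F\in\{1,\ldots,n\}$ (with trivial Hamiltonian) and an $n$-slot buffer $B$, and set
\begin{equation}
\tau_C \;=\; \frac{1}{n}\sum_{k=1}^{n}\ket{k}\!\bra{k}_F \otimes \beta_k,
\end{equation}
where $\beta_k := [\mu_n]_{1,\ldots,k} \otimes \rho^{\otimes(n-k)}$ is the buffer state after ``$n-k$ output slots have been shipped out''; in particular $\beta_n = \mu_n$ and $\beta_1 = [\mu_n]_1 \otimes \rho^{\otimes(n-1)}$.

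Next I would define the covariant operation $\Gamma_{SC}$: conditional on $F=k$, swap $\rho_S$ with slot $k$ of the buffer and decrement the flag to $k-1$; if $k=1$, additionally apply $\Lambda$ to the now-fully-fresh $\rho^{\otimes n}$ buffer and reset the flag to $n$. A direct check shows that after the swap the buffer in branch $k$ is $[\mu_n]_{1,\ldots,k-1} \otimes \rho^{\otimes(n-k+1)} = \beta_{k-1}$ (with the convention that branch $k=1$ transitions to $\beta_n$ via the reapplication of $\Lambda$), so the catalyst marginal of $\Gamma_{SC}(\rho_S\otimes\tau_C)$ equals $\tau_C$ exactly. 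The output marginal on $S$ is $\tfrac{1}{n}\sum_{k=1}^{n}[\mu_n]_k$; since at least a fraction $m/n \geq 1-\delta$ of these marginals are within $\varepsilon$ of $\sigma$ while the remaining $n-m$ are equal to $\rho$, the output lies within trace distance $\varepsilon + 2\delta$ of $\sigma$. Taking $\varepsilon$ and $\delta$ sufficiently small then yields correlated catalysis.

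The main obstacle is verifying covariance at every step: SWAPs between $S$ and buffer slots are covariant because $S$ and each $B_k$ share the same Hamiltonian; $\Lambda$ is covariant by assumption; and the flag increments, decrements, and controlled branching are covariant because $F$ has trivial Hamiltonian and each branch is separately covariant. A secondary bookkeeping subtlety is that the mixture $\tau_C$ must be set up so that slot $k$ carries the marginal $[\mu_n]_k$ in branch $k$, which is exactly what the explicit definition of $\beta_k$ achieves. The $2\delta$ contribution from the padded slots is unavoidable at this level: marginal reducibility only guarantees $m\leq n$ good output sites per application of $\Lambda$, and this shortfall must be absorbed somewhere in the pipeline — the cleanest choice being to route inputs directly to the output during the padding branches, at the cost of this $O(\delta)$ approximation which vanishes as the parameters of $\Lambda$ are tightened.
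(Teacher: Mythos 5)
The paper does not actually prove Fact~3 --- it imports it from Refs.~[PhysRevLett.127.080502, ganardi2023catalytic] --- and your flag-register pipeline is indeed the standard construction used there: a uniform mixture over pipeline stages, exact recovery of the catalyst marginal, and an $\varepsilon + O(\delta)$ error on the system output. The covariance checks and the $\varepsilon+2\delta$ bound are fine. However, there is one concrete step in your bookkeeping that fails: the reset at branch $k=1$. You pad the output to $\mu_n := \mu_m \otimes \rho^{\otimes(n-m)}$ and then claim that, once the buffer has accumulated $\rho^{\otimes n}$, you can ``apply $\Lambda$'' to reach $\beta_n = \mu_n$. But $\Lambda$ maps all $n$ input slots to an $m$-slot state $\mu_m$; to produce $\mu_m \otimes \rho^{\otimes(n-m)}$ you would need $n$ copies to feed $\Lambda$ \emph{plus} $n-m$ further copies of $\rho$ to sit in the padding slots, i.e.\ $2n-m$ copies in total, while the pipeline has only collected $n$. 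Since $\rho$ is generally a resourceful state, the missing $n-m$ copies cannot be prepared for free, so the map $\rho^{\otimes n} \mapsto \mu_m \otimes \rho^{\otimes(n-m)}$ is not implementable from $\Lambda$. Equivalently: per cycle your pipeline ingests $n$ copies of $\rho$ but ships out $m$ processed outputs \emph{and} $n-m$ passed-through copies of $\rho$, and this copy count does not balance.

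The repair is small but necessary: pad with a \emph{free} state rather than with $\rho$, e.g.\ set $\mu_n := \mu_m \otimes (\ket{0}\!\bra{0})^{\otimes(n-m)}$ for some energy eigenstate $\ket{0}$. Then in branches $k>m$ the swap outputs $\ket{0}$ on $S$ (instead of $\rho$), the buffer transition $\beta_k \to \beta_{k-1}$ still holds slot by slot, and at the reset step the map $\rho^{\otimes n} \mapsto \mu_m \otimes (\ket{0}\!\bra{0})^{\otimes(n-m)}$ is genuinely covariant, being $\Lambda$ followed by appending incoherent ancillas. The output marginal becomes $\frac{1}{n}\bigl(\sum_{k\le m}[\mu_m]_k + (n-m)\ket{0}\!\bra{0}\bigr)$, which is still within $\varepsilon + 2\delta$ of $\sigma$, so the rest of your argument goes through unchanged.
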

In fact the converse also holds \cite{ganardi2023catalytic}, assuming that the initial state is \emph{distillable} i.e. if $\rho$ can be transformed into $\sigma$ via correlated catalysis and $\rho$ is distillable, then $\rho$ is marginally reducible into $\sigma$.
\begin{fact}[{\cite[Proposition 4]{ganardi2023catalytic}}]\label{fact:cc-mr}
    If the initial state is distillable, then correlated catalysis implies marginal reducibility.
\end{fact}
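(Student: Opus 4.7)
The plan is to leverage distillability to fabricate the catalyst from copies of $\rho$, and then re-use a single catalyst sequentially to produce many outputs with marginal close to $\sigma$. Given $\varepsilon,\delta>0$, correlated catalysis supplies a catalyst $\tau_C$ and a covariant operation $\Lambda_{SC}$ with $\Lambda_{SC}(\rho_S\otimes\tau_C)=\mu_{SC}$, $\norm{\mu_S-\sigma_S}_1<\varepsilon/2$, and $\mu_C=\tau_C$. Distillability of $\rho$ supplies a constant $k$, depending only on $\tau_C$ and $\varepsilon$, together with a covariant operation mapping $\rho^{\otimes k}$ to a state $\tilde\tau_C$ with $\norm{\tilde\tau_C-\tau_C}_1<\varepsilon/2$.

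I would then construct the free operation on $\rho^{\otimes n}$ in three stages: spend $k$ copies of $\rho$ to prepare $\tilde\tau_C$; sequentially for $i=1,\dots,n-k$ apply $\Lambda_{S_iC}$ to the $i$-th remaining copy together with the single catalyst register $C$; finally trace out $C$. The composition is covariant, the output state lives on $m=n-k$ subsystems, and taking $n\geq k/\delta$ ensures $m/n\geq 1-\delta$.

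The crux is to show that every one of the $m$ marginals is $\varepsilon$-close to $\sigma$. Let $\omega^{(i)}$ denote the state on $S_1\cdots S_iC$ after $i$ applications and define the two induced channels from $C$ by $\Lambda'_C(\eta):=\Tr_S\Lambda_{SC}(\rho\otimes\eta)$ and $\Lambda'_S(\eta):=\Tr_C\Lambda_{SC}(\rho\otimes\eta)$, so that the catalytic condition becomes the fixed-point identity $\Lambda'_C(\tau_C)=\tau_C$, with $\Lambda'_S(\tau_C)=\mu_S$. Since each fresh copy of $\rho$ enters uncorrelated with the catalyst, partial-tracing the $(i+1)$-th application yields the recursion $\omega^{(i+1)}_C=\Lambda'_C(\omega^{(i)}_C)$ and $\omega^{(i+1)}_{S_{i+1}}=\Lambda'_S(\omega^{(i)}_C)$. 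Monotonicity of the trace distance under channels iterates to $\norm{\omega^{(i)}_C-\tau_C}_1\leq\norm{\tilde\tau_C-\tau_C}_1<\varepsilon/2$ for every $i$, and likewise $\norm{\omega^{(i+1)}_{S_{i+1}}-\mu_S}_1<\varepsilon/2$; combining with $\norm{\mu_S-\sigma_S}_1<\varepsilon/2$ via the triangle inequality gives $\norm{\omega^{(i+1)}_{S_{i+1}}-\sigma_S}_1<\varepsilon$. Because $\Lambda_{S_jC}$ acts trivially on $S_1,\dots,S_{j-1}$, the marginal on each $S_i$ is preserved by all subsequent steps, so all $m$ marginals of the final state inherit the bound.

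The main obstacle is justifying the catalyst re-use: the catalyst becomes correlated with the outputs after each step, yet it must still drive the next catalytic transformation correctly. This is exactly what the fixed-point analysis of $\Lambda'_C$ accomplishes, and it relies crucially on the fact that each fresh copy of $\rho$ is uncorrelated with the catalyst. Distillability plays a complementary role by keeping the overhead $k$ constant while $n$ grows, so $(n-k)/n\to 1$ and marginal reducibility follows.
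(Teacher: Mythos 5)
Your proposal is correct, and it reconstructs essentially the same argument as the cited source (the paper itself only imports this statement from Ref.~[ganardi2023catalytic, Proposition~4]): spend $k$ copies to distill the catalyst, re-use it sequentially via the fixed-point/data-processing bound on the induced catalyst channel, and let $k/n\to 0$. The paper's own discussion of the relaxed distillability condition confirms this is precisely the mechanism of the referenced proof, so there is nothing to add.
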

While these results are derived in the context of entanglement theory, they generalize to other resource theories with the appropriate modifications.
Let us now focus on what we mean by \emph{distillable}. A close reading of \cite{ganardi2023catalytic}, shows that the distillability condition can be relaxed to the following: Let $\rho$ can be transformed into $\sigma$ via correlated catalysis, we then say that $\rho$ is distillable, if many copies of $\rho$ can be used to create the catalyst with arbitrary accuracy. Formally speaking, if $\rho$ can be transformed into $\sigma$ via correlated catalysis, then for every $\varepsilon>0$, there exists a $\tau$ and a covariant operation $\Lambda_{SC}$, satisfying Eqs. (\ref{eq:CorrelatedCatalysis}). We then say $\rho$ is distillable if for every $\epsilon > 0$ and $\epsilon' > 0$ there is a number $k$ and a covariant operation $\Lambda'$ such that $\norm{\Lambda'\pqty{\rho^{\otimes k}} - \tau}_1 \leq \epsilon'$.

We first prove that marginal reducibility is a transitive relation.
\begin{lemma}\label{prop:transitivity}
    If $\rho$ is marginally reducible to $\mu$ and $\mu$ is marginally reducible to $\sigma$, then $\rho$ is marginally reducible to $\sigma$.
\end{lemma}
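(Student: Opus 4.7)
The plan is to compose the two given marginal reductions, but carefully. A naive one-shot composition fails: the output $\mu_{m_1}^{(1)} := \Lambda_1(\rho^{\otimes n_1})$ has marginals close to $\mu$, but as a joint state on $m_1$ slots it is nowhere near $\mu^{\otimes m_1}$, so it cannot be fed directly into the operation $\Lambda_2$ promised by $\mu \to \sigma$. The remedy is to manufacture an approximately i.i.d.\ ensemble by running $\Lambda_1$ many times in parallel and extracting one output slot from each run.

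Fix target tolerances $\varepsilon, \delta > 0$. I would first invoke $\mu \to \sigma$ with parameters $(\varepsilon/2, \delta/2)$ to obtain a free operation $\Lambda_2 : S^{\otimes n_2} \to S^{\otimes m_2}$ with $m_2/n_2 \geq 1-\delta/2$ and all marginals of $\mu_{m_2}^{(2)} := \Lambda_2(\mu^{\otimes n_2})$ at trace distance at most $\varepsilon/2$ from $\sigma$. With $n_2$ now fixed, invoke $\rho \to \mu$ with parameters $(\varepsilon/(2n_2), \delta/2)$ to obtain $\Lambda_1 : S^{\otimes n_1} \to S^{\otimes m_1}$ with $m_1/n_1 \geq 1-\delta/2$ and marginals of $\mu_{m_1}^{(1)}$ at distance at most $\varepsilon/(2n_2)$ from $\mu$. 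The order of these choices matters: fixing $n_2$ first allows the $n_2$-fold triangle-inequality blow-up in the next step to be absorbed into the first error parameter.

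The composite operation is built as follows. Take $N := n_1 n_2$ copies of $\rho$ and apply $\Lambda_1$ block-wise $n_2$ times, obtaining $\omega = \bigl(\mu_{m_1}^{(1)}\bigr)^{\otimes n_2}$ on an $n_2 \times m_1$ grid of subsystems indexed by $(j, i)$. Because the $n_2$ copies are independent, for any column $i$ the joint marginal on $\{(j, i)\}_{j=1}^{n_2}$ factorises as $\bigl(\bqty{\mu_{m_1}^{(1)}}_i\bigr)^{\otimes n_2}$, which by the triangle inequality lies within trace distance $n_2 \cdot \varepsilon/(2n_2) = \varepsilon/2$ of $\mu^{\otimes n_2}$. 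Apply $\Lambda_2$ column-wise to each of the $m_1$ columns, obtaining a state $\xi$ on $M := m_1 m_2$ output subsystems. Monotonicity of trace distance under CPTP maps and partial traces then gives every marginal of $\xi$ at distance at most $\varepsilon/2 + \varepsilon/2 = \varepsilon$ from $\sigma$; the yield is $M/N = (m_1/n_1)(m_2/n_2) \geq (1-\delta/2)^2 \geq 1 - \delta$; and the overall map is free because free operations are closed under tensor products, under permutations of identical subsystems, and under composition.

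The main obstacle is exactly the correlation structure of $\Lambda_1$'s output, which rules out naive substitution into $\Lambda_2$. The observation that unlocks the proof is that a column of the parallel-runs grid aggregates one subsystem from each of $n_2$ independent runs, so its joint state is automatically a tensor product of single-run marginals, and only the hypothesised closeness of each individual marginal of $\mu_{m_1}^{(1)}$ to $\mu$ — not any joint structure — is used.
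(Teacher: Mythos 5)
Your proof is correct and follows essentially the same route as the paper's: run $\Lambda_1$ in parallel on $n_2$ blocks, regroup the output grid into columns that are automatically product states of single-run marginals, apply $\Lambda_2$ column-wise, and absorb the $n_2$-fold triangle-inequality blow-up by choosing $\varepsilon_1 = \varepsilon/(2n_2)$ after $n_2$ is fixed. The only cosmetic difference is that you make the order of quantifier choices and the yield computation $(1-\delta/2)^2 \geq 1-\delta$ explicit, which the paper leaves implicit.
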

The argument relies on first principles analysis of marginal reducibility, and the detailed proof can be found in the Supplementary Material.
Along with the equivalence (between marginal reducibility and correlated catalysis), this lemma implies that correlated catalysis is transitive.
As a side remark, Lemma \ref{prop:transitivity} along with results in Ref.~\cite{ganardi2023catalytic} show that correlated catalytic transformations induce a transitive relation between distillable states in bipartite entanglement theory, which might be of independent interest.
This is due to the equivalence between marginal reducibility and correlated catalytic transformations for distillable states~\cite[Theorem 1]{ganardi2023catalytic}.

\emph{Main result.---}
Let us move to the main result of this Letter, namely the set-inclusion of reachable coherences implies that a catalytic covariant transformation is possible.
This suggests that thermodynamic restrictions that arise solely from coherence can be expressed simply as set-inclusion.
\begin{theorem}\label{th:main}
If $\mathcal{J}\pqty{\sigma} \subseteq \mathcal{J}\pqty{\rho}$, then $\rho$ can be transformed into $\sigma$ via approximate catalysis.
\end{theorem}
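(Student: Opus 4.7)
\emph{Proof plan.---} The hypothesis $\mathcal{J}(\sigma) \subseteq \mathcal{J}(\rho)$ gives $\mathcal{I}(\sigma) \subseteq \mathcal{J}(\rho)$, so every off-diagonal frequency of $\sigma$ is already reachable from $\rho$. The plan is to turn this set inclusion into explicit catalytic steps through the qubit states $\ket{+_{ij}}$, concatenate them via marginal reducibility, and then tighten the resulting correlated catalysis into approximate catalysis.

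The first two ingredients are off the shelf. By Fact~\ref{fact:distill}, for each $\Delta_{ij}\in\mathcal{I}(\sigma)$ a quasi-correlated catalytic protocol distills arbitrarily many copies of $\ket{+_{ij}}$ from copies of $\rho$; running these in parallel over the finite set $\mathcal{I}(\sigma)$ produces a large reservoir $\bigotimes_{\Delta_{ij}\in\mathcal{I}(\sigma)}\ket{+_{ij}}^{\otimes N}$. By Fact~\ref{fact:prepare}, this reservoir can be reshaped by a single covariant operation into $\sigma^{\otimes m}$ for any $m$. The net effect is a chain of correlated catalytic transformations carrying $\rho^{\otimes n}$, for a suitable $n$, into an approximation of $\sigma^{\otimes m}$.

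To collapse the chain into a single catalytic transformation I would pass to marginal reducibility. Each link is a correlated catalytic transformation whose input resource is either $\rho$ itself or a bank of $\ket{+_{ij}}$'s that were produced from $\rho$ in the previous link, so the distillability hypothesis required by Fact~\ref{fact:cc-mr} is in place and each link becomes a marginal reduction. Lemma~\ref{prop:transitivity} then concatenates these links into a single marginal reduction of $\rho$ into $\sigma$, and Fact~\ref{fact:mr-cc} converts the combined marginal reduction back into one correlated catalytic transformation from $\rho$ to $\sigma$.

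The main obstacle is the final step, namely upgrading correlated catalysis to approximate catalysis, where the full joint state on system and catalyst must be $\varepsilon$-close to $\sigma_S\otimes\tau_C$ and the catalyst must be returned exactly. My proposed route is to first symmetrize the many-copy output $\mu_m$ of the combined marginal reduction by a random permutation of its $m$ subsystems --- a covariant move, since permutations of identical systems commute with the sum Hamiltonian --- and then apply a finite quantum de Finetti estimate: a permutation-invariant state with all single-site marginals $\varepsilon$-close to $\sigma$ is itself $o(1)$-close in trace norm to $\sigma^{\otimes m}$. Designating one subsystem as $S$ and the remaining $m-1$ as $C$ then yields a joint state close to $\sigma\otimes\sigma^{\otimes(m-1)}$ with the correct system marginal. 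Enforcing the exact catalyst condition $\mu_C=\tau_C$ on top of this approximate product structure is the delicate technical point; I would handle it by inheriting that exactness from the marginal-reducibility-to-correlated-catalysis construction of Fact~\ref{fact:mr-cc}, which is designed precisely so that the catalyst marginal is preserved, and use the distillability of the intermediate $\ket{+_{ij}}$ resources to feed the construction. Making these last estimates quantitative is where I expect the bulk of the technical work to live.
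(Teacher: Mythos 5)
Your overall architecture matches the paper's: use Facts~\ref{fact:distill} and~\ref{fact:prepare} to route the transformation through the qubit states $\ket{+_{ij}}$, convert each correlated catalytic link into a marginal reduction via Fact~\ref{fact:cc-mr}, concatenate with Lemma~\ref{prop:transitivity}, and come back with Fact~\ref{fact:mr-cc}. However, two of the load-bearing steps are not actually carried by your argument. First, the distillability hypothesis of Fact~\ref{fact:cc-mr} is about producing the \emph{catalyst} of each link from many copies of that link's \emph{input}; it is not automatic from the inputs being ``$\rho$ or banks of $\ket{+_{ij}}$'s produced from $\rho$.'' The paper has to open up Takagi's construction and observe that every catalyst in the sequence is a two-level system $\tau$ with $\mathcal{I}(\tau)\subseteq\mathcal{I}(\mu)$ for the corresponding input $\mu$; only this structural fact guarantees distillability. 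Your proposal asserts the hypothesis ``is in place'' without this verification, which is precisely the nontrivial part.

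Second, and more seriously, your route from correlated to approximate catalysis fails. A permutation-invariant state whose single-site marginals are all $\varepsilon$-close to $\sigma$ is \emph{not} $o(1)$-close to $\sigma^{\otimes m}$: the state $\tfrac{1}{2}(\rho_1^{\otimes m}+\rho_2^{\otimes m})$ with $\tfrac{1}{2}(\rho_1+\rho_2)=\sigma$ is permutation invariant with every marginal exactly $\sigma$, yet is far from $\sigma^{\otimes m}$. De~Finetti theorems give closeness to a \emph{mixture} of i.i.d.\ states, which is exactly the obstruction here, so your symmetrization step does not yield the product structure that approximate catalysis demands; nor does Fact~\ref{fact:mr-cc} hand you that structure, since it only delivers correlated catalysis (catalyst marginal exact, system marginal close) rather than closeness of the joint state to $\sigma_S\otimes\tau_C$. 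The paper sidesteps this entirely by performing the correlated-to-approximate upgrade at the intermediate \emph{pure} target $\bigotimes\ket{+_{ij}}^{\otimes N}$, where the equivalence is a known result (\cite[Proposition 8]{ganardi2023catalytic}, essentially because a marginal close to a pure state forces an approximate product structure), and only afterwards applies the covariant map of Fact~\ref{fact:prepare} to reach $\sigma$ --- appending a covariant operation preserves approximate catalysis. You should reorder your argument accordingly; as written, the final step does not go through.
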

\begin{proof}
    Note that for any three states $\rho_1$, $\rho_2$ and $\rho_3$, if $\rho_1$ can be transformed into $\rho_2$ via approximate catalysis and $\rho_2$ can be transformed into $\rho_3$ via covariant operation, then $\rho_1$ can be transformed into $\rho_3$ via approximate catalysis.Therefore, because of Fact~\ref{fact:prepare}, it is enough to show that there exists an approximately catalytic operation transforming $\rho$ into arbitrarily many copies of $\ket{+_{ij}} = \frac{1}{\sqrt{2}} \pqty{\ket{i} + \ket{j}}$ for all $\Delta_{ij} \in \mathcal{J}\pqty{\rho}$.
    Since by assumption $\mathcal{J}\pqty{\sigma} \subseteq \mathcal{J}\pqty{\rho}$, this means we can obtain the state $\sigma$.
    
    Recall that there exists a sequence of correlated catalytic transformations that creates arbitrarily many copies of $\ket{+_{ij}}$ from $\rho$, if $\Delta_{ij} \in \mathcal{J}\pqty{\rho}$ (Fact~\ref{fact:distill}).
    We will show that each step in the sequence is also achievable in marginal asymptotics.
    Note that a close reading of the original proof (of \cite{Takagi_2022}) shows that every catalytic step in the sequence only deals with a two-level catalyst $\tau$ and initial state $\mu$, such that $\mathcal{I}\pqty{\tau} \subseteq \mathcal{I}\pqty{\rho}=\mathcal{I}\pqty{\mu}$. This condition ($\mathcal{I}\pqty{\tau} \subseteq \mathcal{I}\pqty{\mu}$), along with the fact that $\tau$ is a two-level system makes sure that $\mu$ is distillable \cite{marvian2020coherence,Takagi_2022}. Therefore correlated catalysis implies marginal reducibility for each step of the sequence (Fact~\ref{fact:cc-mr}).
    Then, using trasitivity of marginal reducibility (Lemma~\ref{prop:transitivity}), we conclude that the transformation as a whole is also achievable in marginal asymptotics.
    
    Using Fact~\ref{fact:mr-cc}, we conclude that there exists a correlated catalytic transformation that creates arbitrarily many copies of $\ket{+_{ij}}$ from $\rho$, if $\Delta_{ij} \in \mathcal{J}\pqty{\rho}$.
    Finally, since the target state is pure, correlated catalysis is equivalent to approximate catalysis~\cite[Proposition 8]{ganardi2023catalytic}, and the claim is shown.
\end{proof}

Our result answers the conjecture in Ref.~\cite{Takagi_2022} in the positive.
Note that Ref.~\cite{PhysRevA.103.022403} claimed the same result (see Theorem~2) for qubits, but Ref.~\cite{Takagi_2022} pointed out that the proof actually has a gap.
Here, we close the gap through a different technique.
This result implies that covariant operations can transform any two generic states via approximate catalysis. Furthermore, it provides some partial evidence for the following conjecture, that was phrased in a different way in Ref.~\cite{PhysRevA.103.022403}.
\begin{conjecture}
    There exists an approximately catalytic thermal operation transforming $\rho$ to $\sigma$ if and only if $S\pqty{\rho || \gamma} \geq S\pqty{\sigma || \gamma}$ and $\mathcal{J}\pqty{\sigma} \subseteq \mathcal{J}\pqty{\rho}$.
\end{conjecture}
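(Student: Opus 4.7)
The plan is to decompose $\rho \to \sigma$ into two stages: first catalytically distill a supply of coherent qubit resources from $\rho$, and then assemble $\sigma$ from those resources by a covariant map. Since the hypothesis $\mathcal{J}(\sigma) \subseteq \mathcal{J}(\rho)$ gives in particular $\mathcal{I}(\sigma) \subseteq \mathcal{J}(\rho)$, Fact~\ref{fact:prepare} reduces the problem to showing that, for every $\Delta_{ij} \in \mathcal{J}(\rho)$, arbitrarily many copies of $\ket{+_{ij}}$ can be produced from $\rho$ by a \emph{single} approximately catalytic covariant operation. Composing this production step with the covariant assembly of $\sigma$ preserves approximate catalysis, so everything comes down to the distillation stage.

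For distillation, Fact~\ref{fact:distill} already furnishes a \emph{sequence} of correlated catalytic transformations producing $\ket{+_{ij}}^{\otimes N}$ from $\rho$, so the genuine task is to amalgamate this sequence into one transformation. My plan is to route through marginal reducibility: first upgrade every individual step of the sequence from correlated catalysis to marginal reducibility, and then invoke Lemma~\ref{prop:transitivity} to concatenate them into a single marginal reduction $\rho \to \ket{+_{ij}}^{\otimes N}$. The upgrade step requires that the input at each intermediate stage be distillable in the sense of Fact~\ref{fact:cc-mr}. Here I would read off from the construction underlying Fact~\ref{fact:distill} that every intermediate step involves only a two-level catalyst $\tau$ satisfying $\mathcal{I}(\tau) \subseteq \mathcal{I}(\mu)$ for the relevant intermediate input $\mu$; established results on coherence distillation (e.g.\ \cite{marvian2020coherence,Takagi_2022}) then guarantee that such $\mu$ is distillable, so Fact~\ref{fact:cc-mr} applies at every link of the chain.

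Once transitivity yields a single marginal reduction, Fact~\ref{fact:mr-cc} converts it back into a correlated catalytic transformation. To pass from correlated to approximate catalysis I would exploit that the target is a tensor product of pure states $\ket{+_{ij}}$, invoking the equivalence of the two notions for pure targets~\cite[Proposition 8]{ganardi2023catalytic}. Finally, appending the covariant preparation of $\sigma$ from $\bigotimes_{ij}\ket{+_{ij}}^{\otimes N_{ij}}$ via Fact~\ref{fact:prepare} gives the desired transformation $\rho \to \sigma$. The main obstacle, as I see it, is the intermediate verification: checking cleanly that \emph{every} intermediate state of the sequence in Fact~\ref{fact:distill} satisfies the distillability condition of Fact~\ref{fact:cc-mr}, since without this one cannot apply transitivity. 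The rest is a disciplined assembly of the previously established facts and lemma.
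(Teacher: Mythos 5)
The statement you are trying to prove is stated in the paper as a \emph{conjecture}, and it remains open there; what you have written is essentially the paper's proof of Theorem~\ref{th:main}, which is a genuinely weaker claim. The crucial mismatch is that your entire argument operates with \emph{covariant} operations: Fact~\ref{fact:prepare}, Fact~\ref{fact:distill}, and the marginal-reducibility machinery all concern covariant maps, which are thermal operations augmented with an \emph{unbounded source of incoherent free energy}. The conjecture, by contrast, asks for an approximately catalytic \emph{thermal} operation, which must additionally preserve the Gibbs state. Your chain of distillation and assembly steps freely injects free energy at every link, so nothing in the construction respects the constraint $S(\rho\|\gamma)\geq S(\sigma\|\gamma)$ --- indeed that hypothesis is never used anywhere in your argument, which is a strong signal that you are proving a different statement. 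The paper is explicit that closing this gap would require showing that the monotones governing catalytic thermal operations factor exactly into the Gibbs-preserving part (free energy) and the covariant part (reachable coherences), and this factorization is precisely the open problem; it does not follow from composing the covariant results.

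A second, smaller omission: the conjecture is an ``if and only if,'' and your proposal addresses only the sufficiency direction. The necessity of $S(\rho\|\gamma)\geq S(\sigma\|\gamma)$ and $\mathcal{J}(\sigma)\subseteq\mathcal{J}(\rho)$ under approximately catalytic thermal operations is not discussed at all (and the paper notes that even the necessity of the set-inclusion condition under covariant operations is only known in special cases~\cite{arxiv_Shiraishi_2023}). To be clear, your reasoning is a faithful and correct reconstruction of the argument for Theorem~\ref{th:main} --- the reduction via Facts~\ref{fact:prepare} and~\ref{fact:distill}, the upgrade of each step to marginal reducibility using the two-level-catalyst distillability observation, transitivity via Lemma~\ref{prop:transitivity}, and the return to approximate catalysis for pure targets --- but that theorem only establishes the ``if'' direction of statement (1) in the paper's proposed decomposition of the conjecture, not the conjecture itself.
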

In principle, this conjecture can be solved by proving these statements hold in the approximately catalytic setting:
(1) There exists a covariant operation transforming $\rho$ to $\sigma$ if and only if $\mathcal{J}\pqty{\sigma} \subseteq \mathcal{J}\pqty{\rho}$.
(2) The monotones governing thermal operations are exactly the combination of the monotones of Gibbs-preserving operations and covariant operations.
Since approximately catalytic transformations under Gibbs-preserving operations are allowed if and only if $S\pqty{\rho || \gamma} \geq S\pqty{\sigma || \gamma}$~\cite{PhysRevLett.126.150502}, these statements combined would solve the conjecture.
Recently there has been some partial results on proving these statements.
For example, the only if direction in statement (1) has been shown to hold in the special case when the frequencies of $\rho$ and $\sigma$ are related in a particular way~\cite{arxiv_Shiraishi_2023}.
Our result (and Ref.~\cite{arxiv_Shiraishi_2023}) prove the if direction in statement (1).
In addition, statement (2) is known to hold when the initial state is incoherent: if the target state is coherent, then the transformation is forbidden~\cite{PhysRevLett.123.020403}; otherwise it is determined by the free energy~\cite{PhysRevX.8.041051}.

We remark that our result implies that any correlation between the system and catalyst allows us to overcome most restriction from coherence.
This is in contrast to what happens if the catalyst must be completely uncorrelated to the system.
In this case, no amplification of coherence is possible, due to the additivity of quantum Fisher information on product states~\cite{PhysRevA.103.022403}.
This shows yet another example where allowing small correlations between the system and the catalyst significantly simplifies the transformation laws~\cite{PhysRevLett.127.150503,PhysRevLett.126.150502,PhysRevX.8.041051,PhysRevLett.130.240204}.

Let us turn to the properties of the catalyst that is needed in Theorem~\ref{th:main}.
Even though we relied heavily on the explicit construction of the catalyst in Ref.~\cite{Takagi_2022}, Theorem~\ref{th:main} does not provide any construction of the catalyst.
This is because in arguing transitivity of correlated catalysis through marginal reducibility, we are forced to deal with asymptotically many copies.
By analyzing how the error scales with the number of copies in the marginal asymptotic transformation, we can obtain an upper bound to the size of the catalyst that is needed in order to achieve a certain error (using the construction in Ref. \cite{PhysRevLett.127.080502,ganardi2023catalytic}).
However, this does not provide a complete picture as it was shown that there exist transformations that cannot be achieved by finite-dimensional catalysts in other resource theories~\cite{datta2022entanglement}.
With similar techniques, we show that the resource theory of asymmetry admits a related phenomenon where in general we need a catalyst whose Hamiltonian has an unbounded spectrum.
In particular, the transformation that amplifies coherence necessarily needs an unbounded catalyst.
The detailed argument, which uses the properties of quantum Fisher information~\cite{Marvian_2014,PhysRevA.94.012339}, can be found in the Supplementary Material.

\emph{Conclusions.---}
We have shown that allowing system-catalyst correlation in a catalytic procedure lifts most restrictions in the resource theory of asymmetry.
More precisely, Theorem~\ref{th:main} showed that as long as the initial state has some coherence in the right levels, we can amplify it and use it to prepare the target state.
Due to the connection of asymmetry with thermodynamics, this suggests that coherence only places a mild restriction on the allowed thermodynamic transformations.
We believe our results will provide a key step in formulating a fully general law of quantum thermodynamics.

\emph{Note.---}
During the completion of our manuscript, we became aware of an independent related work by [Naoto Shiraishi and Ryuji Takagi], submitted concurrently to the same arXiv posting~\cite{arxiv_Shiraishi_2023}.

\emph{Acknowledgements.---}This work was supported by the National Science Centre Poland (Grant No. 2022/46/E/ST2/00115) and within the QuantERA II Programme (No 2021/03/Y/ST2/00178, acronym ExTRaQT) that has received funding from the European Union's Horizon 2020 research and innovation programme under Grant Agreement No 101017733, and the ``Quantum Optical Technologies'' project, carried out within the International Research Agendas programme of the Foundation for Polish Science co-financed by the European Union under the European Regional Development Fund. The work of Tulja Varun Kondra is supported by the German Federal Ministry of Education and Research (BMBF) within the
funding program “quantum technologies – from basic re-
search to market” in the joint project QSolid (grant
number 13N16163).
\bibliography{main}
\section{SUPPLEMENTAL MATERIAL}
\section{Proof of Lemma 1}
Since $\rho$ is marginally reducible to $\mu$, by definition for any $\varepsilon_1, \delta_1 > 0$, there exist natural numbers $m_1, n_1$ and a free operation $\Lambda_1: S^{\otimes n_1} \rightarrow S^{\otimes m_1}$ such that
    \begin{align}
    \norm{\bqty{\Lambda_1\pqty{\rho^{\otimes n_1}}}_i - \mu}_1 < \varepsilon_1,
    \\
    \frac{m_1}{n_1} \geq 1 - \delta_1.
    \end{align}
    Similarly, for any $\varepsilon_2, \delta_2 > 0$, we can find $m_2, n_2$, and $\Lambda_2$ such that the corresponding statements holds for $\mu$ and $\sigma$.
    Now, let us take $\rho^{\otimes n_1 n_2}$ and split it into $n_2$ blocks, each containing $n_1$ copies of $\rho$ (see Figure~\ref{fig:transitivity}, vertical red blocks).
    When we apply $\Lambda_1$ to each block independently, the state of the system is $\pqty{\Lambda_1 \pqty{\rho^{\otimes n_1}}}^{\otimes n_2}$.
    Let us focus on the first subsystem of each block and denote the marginal as $\mu_{\varepsilon_1} = \bqty{\Lambda_1\pqty{\rho^{\otimes n_1}}}_1$.
    Note that if we take the all of the first subsystem from each block, we find that it is close to $\mu^{\otimes n_2}$
    \begin{align}
        \norm{
           \mu_{\varepsilon_1}^{\otimes n_2} - \mu^{\otimes n_2}
        }_1
        \leq
        n_2 \varepsilon_1.
    \end{align}
    This relation also holds for the second subsystem of each block, etc.
    Therefore, let us rearrange the subsystems into $m_1$ blocks of size $n_2$, where the $i$-th block contains all the $i$-th subsystem from the previous block (see Figure~\ref{fig:transitivity}, horizontal blue blocks).
    Suppose we apply $\Lambda_2$ in each block independently.
    On the first block, we will obtain
    \begin{align}
        \norm{
         \bqty{\Lambda_2 \pqty{ \mu_{\varepsilon_1}^{\otimes n_2}}}_j
         - \sigma
         }_1
         < n_2 \varepsilon_1 + \varepsilon_2,
    \end{align}
    for each subsystem, and analogously for all the other blocks.
    Therefore, for any $\varepsilon, \delta > 0$, we can set $\varepsilon_2 = \frac{\varepsilon}{2}, \varepsilon_1 = \frac{\varepsilon}{2 n_2}$ and $\delta_1 = \delta_2 = \frac{\delta}{2}$ and perform the protocol described above, showing that $\rho$ is marginally reducible to $\sigma$.

\section{Catalysts with unbounded Hamiltonian}
We will first show that the quantum Fisher information is non-increasing under any approximately catalytic transformations with a bounded catalyst Hamiltonian.
Formally, we have a bounded Hamiltonian by imposing that there exists a real number $M$ such that for any $\varepsilon > 0$, there exists a catalyst $\tau$ and a covariant operation $\Lambda$ such that Eq. (3) (of main text) is satisfied, and in addition $\norm{H_C}_{\infty} < M$ (here $H_C$ is the catalyst hamiltonian). Let us note that the quantum Fisher information of a state $\rho$ with Hamiltonian $H$ is given by
\begin{equation}
    F_Q \pqty{\rho, H}=2\sum_{k,l}\frac{\left(\lambda_k-\lambda_l\right)^2}{\lambda_k+\lambda_l}\left|\bra{\psi_k}H\ket{\psi_l}\right|^2.
\end{equation}
Here, $\{\ket{\psi_l}\}$ and $\{\lambda_l\}$ are the eigenvectors and eigenvalues of $\rho$ respectively. Recall that the quantum Fisher information is an asymmetry monotone that fulfills several properties~\cite{Marvian_2014,PhysRevA.94.012339}:
\begin{itemize}
\item (Monotonicity) For any covariant operation $\Lambda$, $F_Q \pqty{\rho, H} \geq F_Q \pqty{\Lambda(\rho), H}$.
\item (Additivity under tensor products) For any $\rho$ and $\tau$, we have $F_Q \pqty{\rho_S \otimes \tau_C , H_S\otimes I_{C}+I_S\otimes H_{C}} = F_Q \pqty{\rho_S, H_S} + F_Q \pqty{\tau_C, H_C}$.
\item (Continuity) For any $\rho$ and $\sigma$, we have $\abs{F_Q \pqty{\rho, H} - F_Q \pqty{\sigma, H}} \leq 32 \norm{H}_{\infty}^2 \sqrt{\norm{\rho - \sigma}_1}$.
\end{itemize}
Let us assume that $\rho$ can be transformed into $\sigma$ by approximate catalysis with a bounded Hamiltonian, and we define the quantities as in Eq. (3) (of main text). These three properties together implies that for any $\varepsilon > 0$, we have
\begin{align}
  F_Q \pqty{\rho_S, H_S}
  &=
    F_Q \pqty{\rho_S \otimes \tau_C, H_S\otimes I_{C}+I_S\otimes H_{C}}
    - F_Q \pqty{\tau_C, H_C}\nonumber
  \\
  &\geq
    F_Q \pqty{\mu_{SC}, H_S\otimes I_{C}+I_S\otimes H_{C}}
    - F_Q \pqty{\tau_C, H_C}\nonumber
  \\
  &\geq
    F_Q \pqty{\sigma_S \otimes \tau_C, H_S\otimes I_{C}+I_S\otimes H_{C}}\nonumber
    \\
    &- 32 \sqrt{\varepsilon} \norm{H_S\otimes I_{C}+I_S\otimes H_{C}}_{\infty}^2
    - F_Q \pqty{\tau_C, H_C} \nonumber
  \\
  &=
  F_Q \pqty{\sigma_S, H_S}
    - 32 \sqrt{\varepsilon} \norm{H_S\otimes I_{C}+I_S\otimes H_{C}}_{\infty}^2 \nonumber
  \\
  &\geq
    F_Q \pqty{\sigma_S, H_S}
    - 32 \sqrt{\varepsilon} \pqty{\norm{H_S}_{\infty} + M}^2.
\end{align}
Since these inequalities have to be satisfied for any $\varepsilon > 0$ and the norm $\norm{H_S}_{\infty}$ is not a function of $\varepsilon$, this means that $F_Q \pqty{\rho_S, H_S} \geq F_Q \pqty{\sigma_S, H_S}$, if we assume the Hamiltonian of the catalyst is bounded.
\begin{figure}
    \centering
    \includegraphics[width=.45\textwidth]{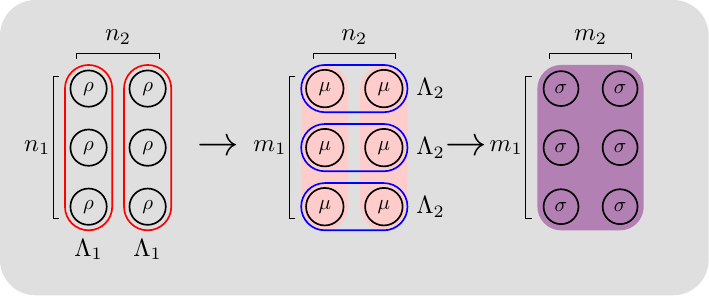}
    \caption{\label{fig:transitivity}
    Transitivity of marginal reducibility.
    Lemma 1 shows that we can compose the protocols for $\rho \to \mu$ and $\mu \to \sigma$ to obtain the protocol for $\rho \to \sigma$.
    }
\end{figure}

In contrast, Theorem 1 allows us to obtain arbitrarily many copies of $\frac{1}{\sqrt{2}} \pqty{\ket{0} + \ket{1}}$ from a single copy, showing that approximately catalytic transformations can increase the quantum Fisher information.
This can be explained by the fact that the product $\sqrt{\varepsilon} \norm{H_S\otimes I_{C}+I_S\otimes H_{C}}_{\infty}^2$ does not vanish in a general approximately catalytic transformation as $\varepsilon$ goes to zero.
This example shows that there are approximately catalytic transformations that are only attainable with an unbounded catalyst Hamiltonian.
\end{document}